\renewcommand{\epsilon}{\varepsilon}
\DeclarePairedDelimiter{\ceil}{\lceil}{\rceil}
\newtheorem{theorem}{Theorem}[section]
\newtheorem{lemma}[theorem]{Lemma}
\newtheorem{corollary}[theorem]{Corollary}
\newtheorem{claim}{Claim}
\title{A general lower bound for collaborative tree exploration}
\newcommand*\samethanks[1][\value{footnote}]{\footnotemark[#1]}
\author{Yann Disser%
\thanks{Institute of Mathematics, TU Darmstadt, Germany. disser@mathematik.tu-darmstadt.de}%
\thanks{Supported by DFG SPP 1736.}
\and
Frank Mousset%
\thanks{Department of Computer Science, ETH Zurich, Switzerland. 
\newline {\{moussetf|anoever|nskoric|steger\}@inf.ethz.ch.} }%
\thanks{Supported by grant no.~6910960 of the Fonds National de la Recherche, Luxembourg.}
\and
Andreas Noever%
\samethanks[3]%
\thanks{Supported by grant no.~200021 143338  of the Swiss National Science Foundation.}
\and
Nemanja \v{S}kori\'c\samethanks[3]
\and
Angelika Steger\samethanks[3]}
\date{\today}
\begin{document}
	\maketitle

\begin{abstract}
  We consider collaborative graph exploration with a set of~$k$ agents.
  All agents start at a common vertex of an initially unknown graph and need to
  collectively visit all other vertices. We assume agents are deterministic,
  vertices are distinguishable, moves are simultaneous, and we allow agents to
  communicate globally. For this setting, we give the first non-trivial lower
  bounds that bridge the gap between small ($k \leq \sqrt{n}$) and large ($k
  \geq n$) teams of agents. Remarkably, our bounds tightly connect to existing
  results in both domains. 

  First, we significantly extend a lower bound of~$\Omega(\log k / \log
  \log k)$ by Dynia et al.~on the competitive ratio of a collaborative tree
  exploration strategy to the range~$k \leq n \log^c n$ for any~$c \in
  \mathbb{N}$.
	Second, we provide a tight lower bound on the number of agents needed for any 
	competitive exploration algorithm. In particular, we show that
	any collaborative tree exploration algorithm with~$k=Dn^{1+o(1)}$
	agents has a competitive ratio of~$\omega(1)$, while Dereniowski et
	al.~gave an algorithm with $k=Dn^{1+\varepsilon}$ agents and
	competitive ratio~$\mathcal{O}(1)$, for any~$\varepsilon > 0$ and
	with~$D$ denoting the diameter of the graph.
  Lastly, we show that, for any exploration algorithm using $k=n$ agents,
  there exist trees of arbitrarily large height $D$ that
  require~$\Omega(D^2)$ rounds, and we provide a simple
  algorithm that matches this bound for all trees.
\end{abstract}

\section{Introduction}

Graph exploration captures the problem of navigating an unknown terrain with a
single or multiple autonomous robots. In the abstract setting, we take the
perspective of an agent that is located at
some vertex of an initially unknown graph, can locally distinguish edges at its
current location, and can choose an edge to traverse in its next move. Various
scenarios for graph exploration have been studied in the past, for different
graph classes and different capabilities of the agent(s). A fundamental goal of
exploration is to systematically visit all vertices/edges of the underlying
graph.
For settings where exploration is possible, we typically ask for efficient exploration algorithms, e.g., in terms of the number of edge traversals.

In this paper, we consider \emph{collaborative} exploration, where a set of~$k$
agents are initially located at some vertex of an unknown \emph{undirected}
graph. We assume agents to move deterministically, allow them to freely
communicate at all times, and to have unlimited computational power and memory
at their disposal. In every round each agent may traverse any edge incident to
its current location, where the edges incident to a vertex are revealed when
that vertex is visited for the first time. The goal is to visit all vertices
while minimizing the number of rounds. More precisely, we are interested in the
competitive ratio of an exploration strategy, i.e., the worst case ratio
between the total number of rounds it needs and the minimum total number of
rounds needed to visit all vertices of the same graph, assuming it is known
beforehand.
We prove new lower bounds for the best-possible competitive ratio of any
collaborative exploration algorithm. Our bounds hold even for the much simpler
setting of \emph{tree} exploration. Note that since our results concern trees,
it makes no difference whether nodes can be distinguished, and whether the
agents need to visit all edges or not.

Let~$\mathcal T_{n,D}$ denote set of all rooted trees with $n$ vertices and
height $D$. Each such tree corresponds to an instance of the tree exploration
problem in which all~$k$ agents start at the root of the tree. Clearly, any
offline exploration algorithm needs $\Omega(n/k+D)$ rounds to explore a tree in
$\mathcal T_{n,D}$ using $k$ agents. This is shown to be tight by the following
offline exploration algorithm that explores the tree in $\Theta(n/k+D)$ rounds:
start with the tree $T$, double its edges, find an Eulerian tour $C$ (of length
$2n-2$), distribute the agents evenly on $C$ (this takes at most $D$ rounds),
and explore $T$ by letting each agent walk along $C$ for $\mathcal O(n/k)$
rounds.

In the online setting, we can
explore a tree in $\mathcal T_{n,D}$ with a single agent using a depth-first
traversal in time $\mathcal{O}(n)$ and thus we trivially have a competitive
ratio of~$\mathcal{O}(1)$ when $k$ is constant. On the other hand, with~$k\geq
\Delta^{D}$ agents, where~$\Delta$ is the maximum degree of the tree, we can
simply perform a breadth-first traversal, which takes~$\mathcal{O}(D)$ steps
and thus also has competitive ratio~$\mathcal{O}(1)$. Observe that in the first
case~$n/k$ dominates the lower bound on the offline optimum, while in the
second case~$D$ is dominating. We are interested in the best-possible
competitive ratios between these two extreme cases.

Surprisingly, Dereniowski et
al.~\cite{DereniowskiDisserKosowskiPajakUznanski/15} showed that already a
polynomial number $k=Dn^{1+\varepsilon}$ of agents allows for a BFS-like
algorithm that achieves a constant competitive ratio. For smaller teams of
agents, Fraigniaud et al.~\cite{FraigniaudGasieniecKowalskiPelc/06,
HigashikawaKatohLangermanTanigawa/12} gave a collaborative algorithm with
competitive ratio~$\mathcal{O}(k/\log k)$. This is only slightly better than
the trivial upper bound of~$\mathcal{O}(k)$ that we get by performing a depth
first traversal with a single agent. Ortolf and
Schindelhauer~\cite{OrtolfSchindelhauer/14} improved this competitive ratio
to~$k^{o(1)}$ for~$k=\smash{2^{\omega(\sqrt{\log D \log \log D})}}$ and~$n =
\smash{2^{\mathcal{O}(2^{\sqrt{\log D}})}}$. The only non-trivial lower bound
for collaborative tree exploration was given by Dynia et
al.~\cite{DyniaLopuszanskiSchindelhauer/07}. They showed that any deterministic
exploration algorithm for~$k<\sqrt{n}$ agents has competitive
ratio~$\Omega(\log k /\log \log k)$.

\subsection*{Our Results}

\begin{figure}
\centering
\begin{tikzpicture}
[
  upperb/.style={black!30!green, very thick},
  breakp/.style={circle, fill, minimum size=4, inner sep = 0}
]
\def\hyt{6}
\def\wid{12}
\def\one{1.02}

\coordinate (b1) at (0,\one);
\coordinate (bu2) at (2.8,1.95);
\coordinate (bu3) at (9,3.25);
\coordinate (bl1) at (3.4,1.43);
\coordinate (bl2) at (7.5,1.75);
\coordinate (bl3) at (7.5,2*\one-1);
\coordinate (bl4) at (9,\one);
\coordinate (bl5) at (\wid-.5,\one);

\draw [dotted] (b1) -- ++(0,-\one);
\node [] at (0,-.4) {$\mathcal{O}(1)$};
\draw [dotted] (bu2) -- ++(0,-1.95);
\node [] at (2.8,-.4) {$D^{\varepsilon}$};
\draw [dotted] (bu3) -- ++(0,-3.25);
\node [] at (3.4,-.4) {$\sqrt{n}$};
\draw [dotted] (bl1) -- ++(0,-1.5);
\node [] at (7.5,-.4) {$n \log^c n$};
\draw [dotted] (bl2) -- ++(0,-1.75);
\node [] at (9,-.4) {$D n^{1+\varepsilon}$};
\draw [dotted] (bl5) -- ++(0,-\one);
\node [] at (\wid-.5,-.4) {$\Delta^D$};

\draw [upperb] (0,\one) -- node [left] {$\mathcal{O}(1)$} (b1);
\draw [black!30!green, dashed] (0,\one) -- node [sloped, very near end,above] {$\mathcal{O}(k)$} (\wid,\hyt-.5);
\draw [upperb] (b1) -- node [sloped,anchor=south,auto=false] {$k/\log k$ \cite{FraigniaudGasieniecKowalskiPelc/06,
HigashikawaKatohLangermanTanigawa/12}} (bu2);
\node [rotate=10, black!30!green, fill=white] at (6, 3.2) {$k^{o(1)}$ \cite{OrtolfSchindelhauer/14}};
\draw [upperb] plot [smooth, tension=1] coordinates {(bu2) (6.1,2.85) (bu3)};
\draw [upperb] (bl4) -- node [sloped,anchor=south,auto=false] {$\mathcal{O}(1)$ \cite{DereniowskiDisserKosowskiPajakUznanski/15}} (bl5);
\draw [upperb] (bl5) -- (\wid,\one);

\draw [black!30!red, dashed, thick] (0,1) -- node [very near end, below] {$\Omega(1)$} (\wid,1);
\draw [black!30!red, very thick] plot [smooth, tension=.55] coordinates {(b1) (2, 1.27) (bl1)};
\draw [black!30!red, line width=2.5pt] plot [smooth, tension=.55] coordinates {(bl1) (6, 1.67) (bl2)};
\node [rotate=5.5, black!30!red, fill=white] at (3.75, 1.05) {$\Omega(\log k / \log \log k)$ \cite{DyniaLopuszanskiSchindelhauer/07}};
\draw [black!30!red, line width=2.5pt] (bl3) -- node[below] {$\omega(1)$} (9,2*\one - 1);

\node [breakp] at (b1) {};
\node [breakp] at (bu2) {};
\node [breakp] at (bu3) {};
\node [breakp] at (bl1) {};
\node [breakp] at (bl2) {};
\node [breakp] at (bl3) {};
\node [breakp] at (bl4) {};
\node [breakp] at (bl5) {};

\draw [black, very thick, ->]  (0,0) -> node [sloped, very near end, anchor=south, auto=false] {comp. ratio} (0,\hyt);
\draw [black, very thick, ->]  (0,0) -> (\wid,0);
\node at (\wid+.2,0) {$k$};


\end{tikzpicture}
\caption{State of the art in collaborative tree exploration.
Upper bounds are green and lower bounds are red. Thick lines show our results. \label{fig:state-of-the-art}}
\end{figure}
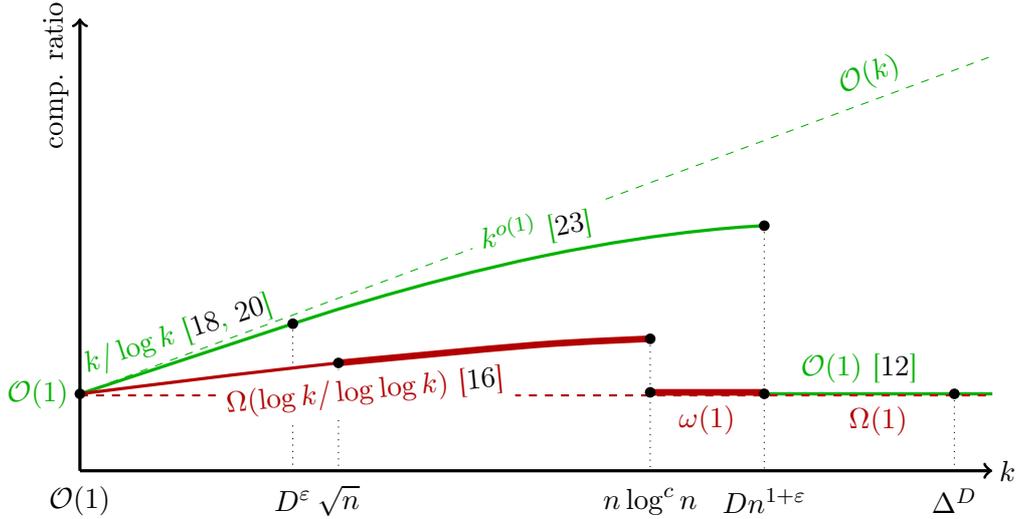

We give the first non-trivial lower bounds on the competitive ratio for
collaborative tree exploration in the domain~$k \geq \sqrt{n}$ (cf.~Figure~\ref{fig:state-of-the-art}).
More precisely, we show that for every constant $c\in \mathbb N$,
any given deterministic exploration strategy with $k \leq n\log^c n$ agents
has competitive ratio $\Omega(\log k /\log \log k)$ on the set of all trees
on $n$ vertices. Note that this extends the range of the bound by Dynia et
al.~\cite{DyniaLopuszanskiSchindelhauer/07} for~$k < \sqrt{n}$ significantly.

Secondly, we show that for every constant $\varepsilon>0$,
there is a constant $D=D(\epsilon)$ such that for any exploration algorithm
with~$k\leq Dn^{1+\varepsilon}$ agents, there exists a tree in $\mathcal T_{n,D}$
on which the algorithm needs at least $D/(5\varepsilon)$ rounds. 
This (almost) tightly matches the algorithm of Dereniowski et
al.~\cite{DereniowskiDisserKosowskiPajakUznanski/15}, which
can explore any tree in at most $(1+o(1))D/\varepsilon$ 
rounds using $k=Dn^{1+\epsilon}$ agents. 
Our result implies that any exploration algorithm with $k=D^{1+o(1)}$ agents has
competitive ratio~$\omega(1)$.
More precisely, we get that for any function $0\leq f(n)\leq o(1)$,
there is a function $D=D(n)$ such that every exploration algorithm 
with~$k = Dn^{1+f(n)}$ agents has competitive ratio~$\omega(1)$ on the trees 
in~$\mathcal T_{n,D}$.
In contrast, the algorithm of Dereniowski et al.~shows
that $k=Dn^{1+\varepsilon}$ agents are sufficient
to get a competitive ratio~$\mathcal{O}(1)$ on such trees.

Finally, for every exploration algorithm with~$k=n$, we construct a tree of height~$D = \omega(1)$ where the algorithm needs~$\mathcal{O}(D^2)$ rounds.
We give a simple algorithm that achieves this bound in general.

\subsection*{Further Related Work}


Many variants of graph exploration with a \emph{single agent} have been studied
in the past. 
Any (strongly) connected graph with \emph{distinguishable} vertices can easily be explored in polynomial time by systematically building a map of the graph.
Regarding the exploration of \emph{undirected} graphs with \emph{indistinguishable} vertices, Aleliunas et al.~\cite{AleliunasKarpLiptonLovaszRackoff/79} showed that a random walk explores any graph in~$\mathcal{O}(n^3 \Delta^2 \log n)$ steps, with high probability.
In order to turn this into a terminating exploration algorithm the agent needs~$\Omega(\log n)$ bits of memory.
Fraigniaud et al.~\cite{FraigniaudIlcinkasPeerPelcPeleg/05} showed that every deterministic algorithm needs~$\Omega(\log n)$ bits of memory, and Reingold~\cite{Reingold/08} gave a matching upper bound.
Disser et al.~\cite{DisserHackfeldKlimm/16} showed that alternatively~$\Theta(\log \log n)$ pebbles and bits of memory are necessary and sufficient for exploration, where a pebble is a device that can be dropped to make a vertex distinguishable and that can be
picked up and reused later.
Diks et al.~\cite{DiksFraigniaudKranakisPelc/04} showed that trees can be explored with $\mathcal{O}(\log \Delta)$ memory, and that $\Omega(\log n)$ memory is required if the agent needs to eventually terminate at the start vertex. 
Ambühl~\cite{AmbuhlGasieniecPelcRadzikZhang/11} gave a matching upper bound for the latter result.

For the case of \emph{directed} graphs with \emph{distinguishable} vertices, Albers and
Henzinger~\cite{AlbersHenzinger/00} gave an exploration algorithm with subexponential
running time $d^{\mathcal{O}(\log d)}m$ that learns a map of the graph.
Here~$m$ denotes the number of edges and~$d$ is the deficiency of the graph, i.e., the number of edges missing to
make the graph Eulerian. This results narrows the gap between a quadratic lower bound
and an exponential upper bound introduced by Deng and
Papadimitriou~\cite{DengPapadimitriou/99}. 

An even more challenging setting (for the agent) is the exploration of \emph{directed},
strongly connected graphs with \emph{indistinguishable} vertices. In general the agent
needs exponential time to explore a graph in this setting. On the other hand,
Bender and Slonim~\cite{BenderSlonim/94} showed that two agents can explore any
directed graph in polynomial time, using a randomized strategy. Bender et
al.~\cite{BenderFernandezRonSahaiVadhan/02} showed that to accomplish this with
a single agent we need~$\Theta(\log \log n)$ pebbles, i.e., ``a friend is worth
$\mathcal{O}(\log \log n)$ pebbles''. Remarkably, Bender et
al.~\cite{BenderFernandezRonSahaiVadhan/02} also showed that if the number of
vertices is known beforehand, a deterministic agent with a single pebble can
explore any directed graph in polynomial time~$\mathcal{O}(n^8 \Delta^2)$.

The lower bounds for collaborative tree exploration discussed above carry over to the \emph{collaborative exploration} of general undirected graphs with distinguishable vertices.
Also, the algorithm of Dereniowski et al.~\cite{DereniowskiDisserKosowskiPajakUznanski/15} for $k = Dn^{1+\varepsilon}$ works on general graphs.
Additionally, Ortolf and Schindelhauer~\cite{OrtolfSchindelhauer/12} gave a lower bound on the best-possible competitive ratio for randomized algorithms of~$\Omega(\sqrt{\log k}/\log \log k)$ for $k = \sqrt{n}$.
Collaborative exploration by multiple random walks without communication has been considered by Alon et al.~\cite{AlonAvinKouckyKozmaLotkerTuttle/11}, Elsässer and Sauerwald~\cite{ElsasserSauerwald/11}, and Ortolf and Schindelhauer~\cite{OrtolfSchindelhauer/15}.

Graph exploration has been studied in many other settings.
Examples include tethered exploration or exploration with limited fuel~\cite{AwerbuchBetkeRivestSingh/99, DuncanKobourovKumar/06}, exploration of mazes~\cite{BlumKozen/78, Hoffmann/81}, and exploration of polygonal environments~\cite{ChalopinDasDisserMihalakWidmayer/11, ChalopinDasDisserMihalakWidmayer/15}.

\section{Results}

Our first result extends the lower bound for $k<\sqrt{n}$ agents of Dynia et
al.~\cite{DyniaLopuszanskiSchindelhauer/07} to the much larger range $k\leq
n\log^{\mathcal O(1)}n$. We prove the following theorem:

\begin{restatable}{theorem}{thmLoglogCompRatio}
	\label{thm1}
  Let $c$ be any positive integer constant.
  Then for every $n$ and every $1\leq k \leq n\log^c n$ there is some $D =
  D(n,k,c)$ such that the following holds: for any given deterministic
  exploration strategy with $k$ agents, there exists a tree $T$ on $n$ vertices
  and with height $D$ on which the strategy needs
  \[ \Omega\Big(\,\frac{\log k}{\log \log k}\cdot(n/k+D)\,\Big) \]
  rounds.
\end{restatable}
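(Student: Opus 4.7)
The plan is to extend the adversarial tree construction of Dynia et al.~\cite{DyniaLopuszanskiSchindelhauer/07}, which establishes the $\Omega(\log k/\log\log k)$ competitive ratio only when $k<\sqrt{n}$. As in that proof, the tree is built in an \emph{online} fashion by an adversary that reacts to the moves of the given deterministic algorithm; the adversary commits to a concrete tree on $n$ vertices only at the end, always in a way consistent with what the algorithm has observed so far.

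The central object is an adversarial \emph{gadget} with parameters $\Delta := \lceil\log k\rceil$ and $\ell := \lceil\log k/\log\log k\rceil$, chosen so that $\Delta^\ell \geq k$. The gadget consists of $\ell$ successive branching points, each offering $\Delta$ outgoing stubs of length $h$. At each branching point, the adversary inspects the algorithm's agent distribution across the $\Delta$ stubs and decides to extend only the stub that contains the fewest agents, truncating the others to leaves. Because the number of agents in the surviving subtree drops by a factor of $\Delta$ per branching point, after $\ell$ levels this number is at most $k/\Delta^\ell = O(1)$. The algorithm must then finish exploring the last stub with essentially one agent, or else wait for reinforcements that have to travel a comparable distance; a careful accounting shows that the total exploration time of the gadget exceeds the offline optimum by a factor of $\Omega(\ell) = \Omega(\log k/\log\log k)$.

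To obtain the bound for the full range $1\leq k\leq n\log^c n$, I would chain many copies of the gadget along a common trunk. The tricky point is that when $k$ approaches $n$, a fully materialised $\Delta$-ary tree of depth $\ell$ would already consume the entire $n$-vertex budget. The fix is \emph{lazy materialisation}: only the one sub-branch kept alive by the adversary at each level is actually expanded into a further branching point, while the other $\Delta-1$ branches remain length-$h$ stubs. A single gadget then costs only $O(\ell \Delta h)$ vertices, independently of $k$, and one can fit roughly $n/(\ell\Delta h)$ copies into the tree. Choosing $h$, the number of gadgets, and the trunk length so that the overall height is $D=D(n,k,c)$ and the total vertex count is exactly $n$, the offline optimum becomes $\Theta(n/k+D)$ while the online algorithm incurs the factor $\ell$ blow-up in each gadget.

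The main obstacle is verifying that lazy materialisation still forces the halving of agents at each branching point. When $k$ is close to $n\log^c n$, even a single gadget may contain fewer vertices than agents, so the algorithm has room to station an agent at almost every visible vertex. One must argue that, because the $\Delta$ stubs at a branching point look locally identical to the algorithm until one of them is extended, the deterministic algorithm is forced to treat them symmetrically, up to an error that the adversary can still exploit to keep the ``weakest'' stub at no more than a $1/\Delta$ fraction of the agents available at that branching point. Making this rigorous is the main technical content: it likely requires a potential argument tracking, for each gadget and each round, the minimum number of agents that have reached any of the $\Delta$ stubs, and showing that this minimum decays as required. The slack $\log^c n$ in the range of $k$ then absorbs the constant-factor losses in this symmetrisation step, and summing the resulting delays over all gadgets yields the claimed competitive ratio.
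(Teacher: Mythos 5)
There is a genuine gap, and it sits exactly where you flag it: the claim that the adversary can keep the ``weakest'' stub at a $1/\Delta$ fraction of the agents, and that this fraction is small enough to slow the algorithm down. When $k\geq n$ (and here $k$ can be as large as $n\log^c n$), the whole revealed tree has fewer vertices than there are agents, so the algorithm can keep a large pile of agents (say, $\Delta$ or many more) sitting at the tip of \emph{every} stub of every gadget simultaneously; no symmetrisation or potential argument can prevent this, because the algorithm is happy to treat the stubs perfectly symmetrically. Then whichever stub the adversary extends, the agents already stationed there explore the fixed-size extension (of order $\Delta h$ vertices) in roughly $h$ rounds, which is what the offline optimum pays as well, so the per-level delay factor $\Omega(\ell)$ never materialises. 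The trouble is not that the surviving branch gets a large \emph{fraction} of the agents, but that even a $1/\Delta$ fraction of $k\gtrsim n$ agents is overwhelming compared to any gadget whose size is fixed in advance. This is precisely why the Dynia et al.\ argument is confined to $k<\sqrt n$ and why ``lazy materialisation'' alone does not rescue it.

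The idea your proposal is missing (and which the paper's Lemma~\ref{lemma:main} is built around) is to make the size of the newly revealed piece \emph{adaptive}: at each reveal time the adversary attaches, below each chosen unvisited vertex $v$, a subtree whose number of leaves is proportional to the number $a(v)$ of agents that could reach $v$ without passing through the root (namely $L(i+1)\cdot a(v)$ leaves at depth one more level down), and spaces the reveal times so that the gap equals the depth of the new vertices. Then no matter how many agents the algorithm has parked nearby, they cannot finish the new piece before the next reveal, and agents routed through the root arrive too late. The vertex budget is then controlled not by limiting the agents per branch but by extending only the small $\lceil\alpha|K_i|\rceil$-subset of subtrees with the \emph{fewest} nearby agents, which bounds the total attached size by roughly $2\alpha k\cdot L(i+1)$ per level; the hypothesis $k\leq n^{1+1/m}/\bigl(6L(m+1)^2(2L)^{1/m}\bigr)$ is exactly what makes this fit into $n$ vertices. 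Without this adaptive sizing (or an equivalent mechanism), your construction cannot produce a superconstant lower bound in the regime $k\geq n$, so the proof as proposed does not go through for the stated range $1\leq k\leq n\log^c n$.
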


As mentioned above, there is an offline algorithm that explores any graph with
$n$ vertices and height $D$ in time $\Theta(n/k+D)$. From this, we obtain the
following corollary to Theorem~\ref{thm1}:

\begin{corollary}
  Let $c$ be any positive integer constant. Then any deterministic exploration
  strategy using $k \leq n\log^c n$ has a competitive ratio of
  \[ \Omega\Big(\,\frac{\log k}{\log \log k}\,\Big). \]
\end{corollary}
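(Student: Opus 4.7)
The proof is essentially a direct combination of Theorem~\ref{thm1} with the offline algorithm mentioned in the introduction, so the plan is short.

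The plan is to fix any deterministic online exploration strategy $\mathcal{A}$ using $k \leq n\log^c n$ agents and to exhibit an instance on which its competitive ratio is $\Omega(\log k / \log \log k)$. First I would apply Theorem~\ref{thm1} to $\mathcal{A}$: this yields, for any $n$, some height $D = D(n,k,c)$ and a tree $T$ on $n$ vertices with height $D$ such that $\mathcal{A}$ needs at least $\Omega\bigl((\log k/\log \log k)\cdot (n/k + D)\bigr)$ rounds on $T$.

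Next I would invoke the offline algorithm sketched in the introduction, which explores any tree on $n$ vertices of height $D$ with $k$ agents in $\Theta(n/k + D)$ rounds (distribute the agents evenly along an Eulerian tour of the doubled tree in $D$ rounds, then let each agent walk along $\mathcal{O}(n/k)$ edges). In particular, the offline optimum $\mathrm{OPT}(T)$ for the specific tree $T$ produced by Theorem~\ref{thm1} satisfies $\mathrm{OPT}(T) = \mathcal{O}(n/k + D)$.

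Finally, I would simply take the ratio: the competitive ratio of $\mathcal{A}$ on $T$ is at least
\[
\frac{\Omega\bigl((\log k/\log \log k)\cdot (n/k + D)\bigr)}{\mathcal{O}(n/k + D)} = \Omega\Bigl(\frac{\log k}{\log \log k}\Bigr).
\]
Since this holds for every $n$ (and hence for arbitrarily large instances), the competitive ratio of $\mathcal{A}$ is $\Omega(\log k / \log \log k)$, as claimed. There is no real obstacle here; the only thing to double-check is that the offline upper bound $\mathcal{O}(n/k+D)$ indeed applies to the tree produced by Theorem~\ref{thm1} (which it does, since that tree lies in $\mathcal{T}_{n,D}$), so that the $(n/k+D)$ factor cancels cleanly.
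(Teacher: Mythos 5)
Your proposal is correct and is exactly the argument the paper intends: the corollary is stated as an immediate consequence of Theorem~\ref{thm1} combined with the offline $\Theta(n/k+D)$ exploration algorithm, so dividing the lower bound by the offline optimum on the constructed tree gives the claimed competitive ratio. Nothing is missing.
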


Our second main result shows that the algorithm of Dereniowski et
al.~\cite{DereniowskiDisserKosowskiPajakUznanski/15} that explores a graph
with $k = Dn^{1+\epsilon}$ agents in time $(1+o(1))D/\epsilon$ is almost
optimal: using $k\leq Dn^{1+\epsilon}$ agents it is generally impossible to
explore the graph in fewer than $D/(5\epsilon)$ rounds.

\begin{restatable}{theorem}{thmIterations}
	\label{thm2}
  Given any constant $\epsilon>0$ there is an integer $D=D(\epsilon)$ such that
  for sufficiently large $n$ and for every deterministic exploration strategy
  using $k \leq D\cdot n^{1+\epsilon}$ agents, there exists a tree on $n$
  vertices and with height $D$ on which the strategy needs at least
  $D/(5\epsilon)$ rounds.
\end{restatable}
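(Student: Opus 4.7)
The plan is to prove the lower bound via an adaptive adversary argument. I would construct, for any given deterministic exploration strategy with $k \leq D n^{1+\epsilon}$ agents, a tree $T \in \mathcal{T}_{n,D}$ online such that after $\lfloor D/(5\epsilon) \rfloor$ rounds the algorithm has not visited all vertices of $T$. The constant $D = D(\epsilon)$ will be chosen sufficiently large so that the factor $1/(5\epsilon)$ has room, and so that the tree template has enough levels to support the recursive construction described below.

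I would fix a tree template with a recursive branching structure in which each internal vertex has branching factor of order $n^{c\epsilon}$ for some suitable constant $c$. The adversary's freedom lies in deciding, at each vertex, which child is the \emph{spine} (leading to a descendant at depth $D$) and which children are \emph{decoys} (rooted at shorter subtrees of adversarially chosen depths). The template is set up so that the final tree has exactly $n$ vertices and height exactly $D$, with the adversary's budget of decoys and their depths at each level carefully apportioned. When a vertex $v$ is first visited, the adversary reveals $v$'s children but defers the spine assignment; later, as the algorithm distributes agents among these children, the adversary designates as spine whichever child has received the fewest agents, routing the deep path away from the algorithm's concentration of force.

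The analysis would proceed via a potential function $\Phi(t)$ measuring the minimum remaining time for any agent to reach a depth-$D$ descendant along some adversary-consistent spine. We have $\Phi(0) = D$, and I would show that under the adversary's strategy $\Phi$ decreases by at most $5\epsilon$ per round on average, so that $\Phi(T) \geq D - 5\epsilon T$ for all $T$. For the algorithm to complete exploration it must reach each spine depth-$D$ vertex, which requires $\Phi(T) \leq 0$, giving $T \geq D/(5\epsilon)$.

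The main obstacle is establishing the slow decrease of $\Phi$. Because the adversary hides each level's spine child among roughly $b \approx n^{c\epsilon}$ candidates, and the $k = Dn^{1+\epsilon}$ total agents can only significantly populate a $1/b$-fraction of candidate branches at each level, the spine-committed agents per spine vertex remain small, permitting only an $\mathcal{O}(\epsilon)$ depth advance per round on average. Making this rigorous while simultaneously respecting the tree size $n$ and the height $D$, and accounting for agents that ``stop short'' in decoy subtrees, is the delicate combinatorial step; the factor of $5$ in the denominator of $D/(5\epsilon)$ is there to absorb the slack from this bookkeeping rather than to be optimal.
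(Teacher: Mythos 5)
Your proposal takes a different route from the paper (the paper derives Theorem~\ref{thm2} in a few lines from Lemma~\ref{lemma:main}, whose revealer strategy attaches, at the \emph{least-populated} $\alpha$-fraction of frontier vertices, stars whose size is \emph{proportional to the local agent count} $a_{t_i}(v)$), and as it stands it has a genuine gap. The central quantitative claim --- that the $k=Dn^{1+\epsilon}$ agents ``can only significantly populate a $1/b$-fraction of candidate branches at each level'' with $b\approx n^{c\epsilon}$ --- is false in this regime: the final tree has only $n$ vertices, so there are at most $n$ candidate branches in the entire instance, while $k\geq Dn^{1+\epsilon}\gg n$; the explorer can keep $\Omega(n^{\epsilon})$ agents on \emph{every} revealed vertex and in particular on every child of every spine candidate. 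Hence designating the least-populated child as the spine does not starve it of agents, and the claimed $\mathcal{O}(\epsilon)$-per-round bound on the decrease of $\Phi$ has no supporting mechanism. Relatedly, a \emph{fixed} template cannot resolve the tension between hiding and size: if the decoys are deep enough to keep the spine information-theoretically hidden, the number of hiding places is bounded by the tree size $n<k$, and if they are shallow, the explorer identifies the spine by elimination essentially at BFS speed for the first $\Theta(1/(c\epsilon))$ levels (pure splitting leaves $k/b^{j}\geq b$ agents at the frontier), after which the effect of recalled/pipelined reinforcements, which trail the frontier by only a constant number of rounds, is not addressed. Everything you label ``the delicate combinatorial step'' is exactly the missing heart of the proof.

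What makes the paper's construction work, and what your static template lacks, is adaptivity in the \emph{sizes} of the revealed subtrees: at phase $i$ the revealer attaches $L(i+1)\cdot a_{t_i}(v)$ new leaves at depth $L(i+1)$ below each selected vertex $v$, so the locally available agents can never finish them within the $t_{i+1}-t_i=L(i+1)$ rounds of the phase, while all other agents are too far (they would have to pass through the root) to arrive before the phase ends; choosing only the $\lceil\alpha|K_i|\rceil$ frontier vertices with smallest $a_{t_i}(v)$, and summing over the pairwise root-disjoint subtrees, keeps the total size at most $n$ even though $k$ may be as large as roughly $n^{1+1/m}$. This yields a delay of $i+1$ rounds per unit of height at level $i$, which is where the total $\binom{m}{2}\geq m/(5\epsilon)$ comes from; your proposal has no analogous mechanism forcing a per-level delay of order $1/\epsilon$, so the bound $D/(5\epsilon)$ is not established.
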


In the range where $k \geq n$, the offline optimum is determined by the height~$D$ 
of the tree. Therefore, the result of Dereniowski et al.\ mentioned above
implies that the competitive ratio is constant when $k = D\cdot
n^{1+\Omega(1)}$. Theorem~\ref{thm2} shows in particular that in some
sense this is tight:

\begin{corollary}
  For any function $0 \leq f(n)\leq o(1)$, there is a function
  $D=D(n)$ such that the competitive ratio of any deterministic
  exploration strategy using $k = D\cdot n^{1+f(n)}$ agents 
	is~$\omega(1)$ on the the set~$\mathcal T_{n,D}$ of all rooted trees
  with $n$ vertices and height $D$. 
\end{corollary}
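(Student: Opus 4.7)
The plan is to derive this corollary by invoking Theorem~\ref{thm2} with a carefully chosen \emph{non-constant} $\epsilon=\epsilon(n)$ that decays to $0$ slowly enough to stay above $f(n)$, and then to compare the resulting lower bound on the number of rounds against the straightforward offline optimum of $\Theta(D)$ that holds once $k\geq n$.

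First, I would discretize $\epsilon$. For each positive integer $i$, Theorem~\ref{thm2} applied to the constant $\epsilon_i := 1/i$ produces an integer $D_i := D(\epsilon_i)$ and a threshold $N_i$ such that for every $n\geq N_i$, any deterministic exploration strategy using at most $D_i\cdot n^{1+\epsilon_i}$ agents requires at least $D_i/(5\epsilon_i)=iD_i/5$ rounds on some tree in $\mathcal T_{n,D_i}$. Because $f(n)=o(1)$, for each $i$ there also exists a threshold $M_i\geq N_i$ beyond which $f(n)\leq 1/i=\epsilon_i$. Choosing the $M_i$ strictly increasing, I define the step functions $\epsilon(n):=\epsilon_i$ and $D(n):=D_i$ for $n\in[M_i,M_{i+1})$; by construction $\epsilon(n)\to 0$ while $f(n)\leq \epsilon(n)$ for all sufficiently large $n$.

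Next, I would apply Theorem~\ref{thm2} at this point-dependent $\epsilon(n)$: any strategy with $k=D(n)\cdot n^{1+f(n)}\leq D(n)\cdot n^{1+\epsilon(n)}$ agents needs at least $D(n)/(5\epsilon(n))$ rounds on some tree $T\in\mathcal T_{n,D(n)}$. For the denominator, recall from the introduction that the offline optimum for any tree in $\mathcal T_{n,D}$ with $k$ agents is $\Theta(n/k+D)$; since $k\geq n^{1+f(n)}\geq n$, the first term is $\mathcal O(1)$ and the offline optimum is $\Theta(D(n))$. Hence the competitive ratio on $T$ is at least
\[
\frac{D(n)/(5\epsilon(n))}{\Theta(D(n))} \;=\; \Omega\!\bigl(1/\epsilon(n)\bigr) \;=\; \omega(1),
\]
as required.

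There is essentially no genuine obstacle here, only a bookkeeping one: the function $D(\epsilon)$ supplied by Theorem~\ref{thm2} is not stated to be monotone or to have any particular quantitative behaviour as $\epsilon\to 0$, so I cannot simply substitute $\epsilon=f(n)$ or $\epsilon=1/\log\log n$ and pretend $D$ inherits a nice formula. The step-function construction above circumvents this by pulling each value of $D(n)$ directly from the theorem at a rational $\epsilon_i$, trading a continuous parameter for a countable one. The only mild care required is ensuring that the thresholds $M_i$ are chosen so that $n\geq M_i$ simultaneously guarantees $f(n)\leq\epsilon_i$ and the "sufficiently large $n$'' hypothesis of Theorem~\ref{thm2} at $\epsilon_i$, which is immediate since finitely many lower-bound conditions can always be merged into a single threshold.
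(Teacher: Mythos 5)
Your proposal is correct and matches the paper's (implicit) derivation: the paper states this corollary without a separate proof, as a consequence of Theorem~\ref{thm2} together with the observation that for $k\geq n$ the offline optimum on $\mathcal T_{n,D}$ is $\Theta(D)$. Your step-function diagonalization over $\epsilon_i=1/i$ with merged thresholds is exactly the natural formalization of that intended argument, and it handles the only delicate point (the unspecified dependence of $D(\epsilon)$ on $\epsilon$) correctly.
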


Finally, it is possible for $k=n$ agents to explore any tree on $n$
vertices and of height $D$ in $D^2$ rounds using a breadth-first
exploration strategy. More precisely, we can split the $D^2$ rounds in
$D$ phases of length $D$, and in each phase $1\leq i\leq D$ do the
following. Let $A_i$ be the set of unvisited leaves of the tree that is
revealed at the start of phase $i$. Then we send one agent to each
vertex in $A_i$ along a shortest path. This is clearly doable in $D$
rounds, and after $D$ such phases, the tree is completely explored. We
show that the running time of $D^2$ is optimal up to a constant factor:

\begin{restatable}{theorem}{thmSquaredRounds}
	\label{thm3}
  For every $n$ and every deterministic exploration strategy using $k=n$
  agents, there exists a tree $T$ on $n$ vertices and with height $D=\omega(1)$
  such that the strategy needs at least $D^2/3$ rounds to explore $T$.
\end{restatable}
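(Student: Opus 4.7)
My plan would be to use an adaptive adversary argument. Given any deterministic algorithm $A$ using $k=n$ agents, I will construct a tree $T$ with $n$ vertices and height $D = \omega(1)$ (where $D$ is chosen as a function of $n$) such that $A$ needs at least $D^2/3$ rounds on $T$. The tree is built on the fly as $A$'s agents move, with the adversary revealing edges and committing vertex structures only when forced by the algorithm.

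The backbone of the construction is a spine $v_0, v_1, \ldots, v_D$. At each $v_i$ for $i<D$, the adversary plans to attach several children, exactly one of which is the spine continuation $v_{i+1}$, and the rest of which are roots of ``decoy'' subtrees. The adversary defers committing which specific child plays the role of $v_{i+1}$ as long as possible; when finally forced to choose (for instance because every other child has been visited and committed as a decoy), she picks the child that minimizes the algorithm's progress, e.g.\ the one reached by the fewest agents. The decoys will themselves be moderately deep subtrees (not just leaves), tuned so that agents entering a decoy cannot quickly redeploy to the spine tip.

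For the lower bound I would introduce a potential $\Phi(t)$ that measures both (i) the index $i(t)$ of the farthest spine vertex visited so far, and (ii) a dispersion cost that quantifies how many ``productive'' agents are positioned at or near the current spine tip. The initial potential would be $\Omega(D^2)$, full exploration would force $\Phi = 0$, and the central lemma would be that $\Phi$ can decrease by at most a constant per round under the adversary's strategy. This would immediately yield $T \geq D^2/3$ after tuning constants. The offline bound $\Theta(n/k+D) = \Theta(D)$ mentioned in the excerpt then gives the competitive ratio $\omega(1)$ when $D=\omega(1)$, matching the theorem.

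The main obstacle is proving this per-round potential bound. A naive spine-with-leaves construction does not suffice: with shallow leaves, the algorithm can refill agents at the spine tip in only $O(1)$ rounds by bringing agents back from nearby depth-$1$ leaves, which yields only a trivial $\Omega(D)$ total bound. To actually obtain $\Omega(D^2)$, the decoys must be structured so that the algorithm must spend $\Omega(D)$ rounds of genuine work between two successive increments of $i(t)$. Designing decoys that do this while keeping the total vertex count equal to $n$, and verifying the per-round $O(1)$ potential decrease against an arbitrary (possibly very clever) agent-routing strategy, is the delicate technical heart of the argument. To make the budget constraints compatible I would expect to choose $D$ as a slowly growing function of $n$, for instance $D = \lfloor \log n \rfloor$, so that the decoys have enough room and so that $D = \omega(1)$ is satisfied.
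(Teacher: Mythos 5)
There is a genuine gap: what you present is a proof plan whose central lemma is exactly the part you leave open. You yourself flag that ``verifying the per-round $O(1)$ potential decrease against an arbitrary agent-routing strategy'' and designing decoys that force $\Omega(D)$ work between spine increments is ``the delicate technical heart'' --- but that heart is the entire difficulty, and nothing in the proposal indicates how to carry it out. Worse, the single-spine architecture is unlikely to be pushable to $\Omega(D^2)$ at all: with $k=n$ agents the algorithm can flood the unique tip, and since the total vertex budget is only $n$, the adversary cannot attach decoys large enough at one location to stall $\Theta(n)$ nearby agents for $\Omega(D)$ rounds per spine step. The mechanism that makes the actual proof work is different in two essential ways. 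First, the adversary maintains \emph{many parallel branches} (the initial tree is a star of $\lceil n/(2L)\rceil$ paths), and at each phase it extends only the $\lceil\alpha|K_i|\rceil$ frontier vertices having the \emph{fewest} agents in their root-subtree; because the frontier vertices lie in distinct subtrees, their agent counts sum to at most $k$, so the selected ones carry at most a $2\alpha$-fraction of all agents, which is what keeps the vertex budget within $n$. Second, the size of the subtree attached at a selected vertex $v$ is \emph{proportional to the local agent count} $a_{t_i}(v)$ (a path ending in a star with $L(i+1)\,a_{t_i}(v)$ leaves): the local agents cannot finish that many leaves in the $L(i+1)$ rounds until the next phase, and agents routed through the root cannot even reach depth $L(i+1)$ in that time, so some vertex at the new depth is guaranteed to survive unvisited. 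This phase-by-phase survival argument replaces your per-round potential bound and yields $L\binom{m}{2}$ rounds for a tree of height $Lm$; Theorem~\ref{thm3} then follows by taking $L=1$ and $m=\lceil\sqrt{\log n}\,\rceil$ in Lemma~\ref{lemma:main}, with the vertex-count and agent-count conditions checked explicitly. Your ``pick the child reached by the fewest agents'' heuristic gestures at the right selection rule, but without the parallel branches, the agent-proportional sizing of the attached subtrees, and the accompanying budget computation, the proposal does not constitute a proof.
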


In all the results above, we have considered the worst-case performance
of an exploration strategy on any tree. However, by looking at the proofs
of Theorems~\ref{thm1} and \ref{thm2}, one can see that
the heights of our lower bound constructions are typically quite small.
We believe it is also natural to ask about the competitive ratio on
the set of trees of height at least $D$, for a given $D$. We show that at
least for subpolynomial heights, the competitive ratio with $k=\Theta(n)$
agents is unbounded:

\begin{restatable}{theorem}{thmAnySublinearDiameter}
  \label{thm:any}
  For any function $D\leq n^{o(1)}$ and any exploration strategy using $k=\Theta(n)$
  agents, the competitive ratio on the set of all trees of size~$n$ and height 
	at least~$D$ is~$\omega(1)$.
\end{restatable}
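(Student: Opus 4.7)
Apply Theorem~\ref{thm2} with a vanishing parameter $\epsilon=\epsilon_n$, and prepend a bare path to the resulting hard tree if its height falls short of the target $D=D(n)$.

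\textbf{Choice of $\epsilon_n$.} Write $k\leq Cn$. Since $D\leq n^{o(1)}$, one can pick $\epsilon_n\to 0$ slowly with $\epsilon_n\cdot D\to 0$; for instance $\epsilon_n=1/(D\log n)$. Let $h(\epsilon)$ denote the tree height that Theorem~\ref{thm2} produces for parameter $\epsilon$. Using $h(\epsilon_n)\geq 1$ and $n^{\epsilon_n}\geq 1$, the condition $k\leq h(\epsilon_n)\cdot n^{1+\epsilon_n}$ of Theorem~\ref{thm2} is satisfied for all sufficiently large $n$. Although the statement of Theorem~\ref{thm2} assumes $\epsilon$ constant, its proof goes through verbatim for any sufficiently slowly vanishing $\epsilon_n$---this is exactly the flexibility already exploited by the corollary following Theorem~\ref{thm2}.

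\textbf{Constructing the hard tree.} Let $h_n:=h(\epsilon_n)$ and $\ell_n:=\max\{0,\,D-h_n\}$. Applying Theorem~\ref{thm2} to $n_0:=n-\ell_n$ vertices with parameter $\epsilon_n$ yields a tree $T'_n$ of height $h_n$ on which every $k$-agent strategy requires at least $h_n/(5\epsilon_n)$ rounds. If $h_n\geq D$, pad $T'_n$ back to $n$ vertices by attaching a few extra leaves at its root; otherwise, prepend a bare path of length $\ell_n$ to the root of $T'_n$. In either case the resulting tree $T_n$ has exactly $n$ vertices and height at least $D$.

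\textbf{Simulation argument and conclusion.} Any $k$-agent strategy $S$ on $T_n$ induces a $k$-agent strategy $\hat S$ on $T'_n$, starting from the root of $T'_n$, as follows: $\hat S$ internally simulates $S$ on $T_n$ (the path portion being known to $\hat S$ a priori, the rest being exactly what $\hat S$ explores), and moves each real agent to mirror the position of the corresponding simulated agent whenever this position lies in $T'_n$, keeping it at the root of $T'_n$ otherwise. This can be carried out using only the observations $\hat S$ makes of $T'_n$, the resulting moves are valid, and $\hat S$ explores $T'_n$ in no more rounds than $S$ takes on $T_n$. Choosing $T'_n$ adversarially against $\hat S$ via Theorem~\ref{thm2} therefore forces $S$ to use at least $h_n/(5\epsilon_n)$ rounds on $T_n$. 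Since $\mathrm{OPT}(T_n)=\Theta(n/k+D)=\Theta(D)$, the competitive ratio is at least $\Omega\bigl(h_n/(\epsilon_n D)\bigr)\geq \Omega\bigl(1/(\epsilon_n D)\bigr)=\omega(1)$ by the choice of $\epsilon_n$. The hardest step will be justifying the use of Theorem~\ref{thm2} with a non-constant $\epsilon_n$, which calls for a quick inspection of its proof; the shadowing construction of $\hat S$ is conceptually standard but must be set up carefully so that $\hat S$'s moves depend only on what it has actually observed in $T'_n$.
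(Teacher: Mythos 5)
There is a genuine gap, and it is quantitative, not just a matter of ``inspecting the proof of Theorem~\ref{thm2}''. The proof of Theorem~\ref{thm2} is Lemma~\ref{lemma:main} with $L=1$ and $m=\ceil{1/(2\epsilon)}$, so with your choice $\epsilon_n=1/(D\log n)$ you would be taking $m\approx D\log n/2$. Then both hypotheses of the lemma fail badly: $n\geq L\cdot 16^m$ would require $m=O(\log n)$, and the agent bound $k\leq n^{1+1/m}/\bigl(6(m+1)^2(2)^{1/m}\bigr)$ has right-hand side $\approx n\,e^{2/D}/(6(D\log n/2)^2)=o(n)<k$ for $k=\Theta(n)$. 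The reason the constant-$\epsilon$ proof cannot be carried over ``verbatim'' is that it absorbs the polynomial factor $(m+1)^2$ into the slack $n^{0.4\epsilon}$, which disappears once $\epsilon_n\to 0$ this fast. So the claimed bound $\Omega\bigl(1/(\epsilon_n D)\bigr)$ is never actually established.

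More fundamentally, no choice of vanishing $\epsilon_n$ can rescue the plan of ``height-$m$ gadget plus a bare path of length $\approx D$''. With $L=1$ and $k=\Theta(n)$, the lemma's agent condition forces $m=O(\log n/\log\log n)$, so the adversary's lower bound is only $\binom{m}{2}=O\bigl((\log n/\log\log n)^2\bigr)$ rounds, while prepending the path raises the offline optimum to $\Theta(D)$ without adding anything to the online lower bound (your simulation argument, which is essentially fine, shows exactly this: the path is dead weight for the adversary). The resulting ratio is $\approx m^2/D$, which tends to $0$ for any $D$ beyond polylogarithmic, e.g.\ $D=e^{\sqrt{\log n}}\leq n^{o(1)}$ — precisely the regime the theorem must cover, and the paper explicitly stresses that this theorem, unlike the others, must work for \emph{every} $D\leq n^{o(1)}$. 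The missing idea is to make the height budget productive for the adversary: the paper applies Lemma~\ref{lemma:main} directly with $L=D$ and a slowly growing $m=\omega(1)$ (with $m\leq\min\{f(n)^{1/2},(\log n)^{1/2}\}$ for $D=n^{1/f(n)}$), so every level of the construction has depth $\sim D$, the online lower bound becomes $L\binom{m}{2}=\omega(Dm)$, and the offline optimum is $O(Dm+n/k)=O(Dm)$, giving ratio $\omega(1)$ for the whole range of $D$.
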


We stress that Theorem~\ref{thm:any} differs from the other results in that
it applies to \emph{any} height $D\leq n^{o(1)}$, while in the other results
merely state that there \emph{exists} some height with the desired property.

%
%

\section{Tree exploration games}\label{sec:games}

In order to prove a lower bound on the competitive ratio, we consider a tree exploration game defined as follows.
By a \emph{tree exploration game with $k$ agents} we mean a game with two
players, the \emph{explorer} (the online algorithm) and the \emph{revealer} (the adversary), 
played according to the
following rules. The game proceeds in rounds which we index by the variable $t$
(`time'), the first round being $t=0$. The state of the game at time $t$ is
described by a triple $(T_t, A_t, \phi_t)$, where $T_t$ is a rooted tree (the
tree revealed at the beginning of round $t$), $A_t$ is a subset of the vertices
of $T_t$ (the subset of \emph{visited} vertices by  round $t$),
and $\phi_t\colon \{1,\dotsc,k\} \to A_t$ is an assignment of the agents to the vertices
(where $\phi_t(i)$ is the location of the $i$-th agent at time $t$). In round
$t=0$ the revealer decides on the initial tree $T_0$. The state at time $0$ is
then given by $(T_0, A_0, \phi_0)$ where $A_0= \{\text{root}(T_0)\}$ and
$\phi_0(x)= \text{root}(T_0)$ for all $1\leq x\leq k$ -- that is to say, all agents
are initially at the root of $T_0$. In every round $t>0$, each player can make
a move. First, the explorer creates a new assignment $\phi_t$ by moving 
each agent $i$ to a neighbor of $\phi_{t-1}(i)$ in $T_{t-1}$ or by keeping the location of the agent same, i.e., $\phi_t(i) = \phi_{t-1}(i)$.
Then the revealer decides on the new tree $T_t$, where $T_t$ must be obtained
from $T_{t-1}$ by attaching (possibly empty) trees at some vertices $v\in
V(T_{t-1})\setminus A_{t-1}$, where $V(T_{i-1})$ is the set of vertices of $T_{i-1}$. We then let $A_t = A_{t-1} \cup N_t$ where  $N_t
= \{ \phi_t(i) : 1\leq i\leq k\}$ is the set of the new agent locations. The
game ends in round $t^*$ if all vertices of $T_{t^*}$ are visited at the
beginning of round $t^*$, i.e., if $A_{t^*} = V(T_{t^*})$.

This type of game naturally lends itself to proving lower bounds for the
time in which $k$ agents can explore an unknown tree. Specifically, consider
any deterministic strategy for exploring an unknown tree $T$ with $k$ agents.
Such a strategy can be interpreted as a strategy for the explorer in the tree
exploration game with $k$ agents. If the revealer can play so that the game
lasts for at least $t^*$ rounds, then this means that the proposed exploration
strategy needs $t^*$ rounds to explore the tree $T_{t^*}$. We will use this
observation to prove lower bounds for the online graph exploration in the
following section.

As a side remark, here it is crucial that the strategy is
deterministic: if the strategy were allowed to make random choices, then the
tree $T_{t^*}$ would turn out to be a random variable that might be highly
correlated with the random choices made by the explorer, and it could not serve
as an instance on which the strategy performs badly.

\section{Lower bound construction}\label{sec:main}
We now give our lower bound construction that establishes the following technical lemma.

\begin{lemma}\label{lemma:main}
  Let $n,L,m$ be positive integers such that $n\geq L\cdot 16^m$.
  Then for any deterministic exploration strategy using
  \[ k\leq \frac{n^{1+1/m}}{6L(m+1)^2(2L)^{1/m}} \]
  agents, there exists a tree $T$ on $n$ vertices and of height $Lm$
  such that the strategy needs at least $L\binom{m}{2}$ rounds to explore $T$.
\end{lemma}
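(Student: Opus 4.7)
The plan is to use the tree exploration game from Section~\ref{sec:games}: I will describe a revealer strategy that forces any deterministic explorer with $k$ agents (in the specified range) to take at least $L\binom{m}{2}$ rounds on an adaptively built tree. The underlying tree has a recursive hub-and-paths structure with $m$ levels of ``hubs'' at depths $L, 2L, \dots, Lm$, where each hub at level $i < m$ is connected to $B$ children hubs at level $i+1$ by paths of length~$L$. The branching factor $B$ will be chosen around $(n/L)^{1/m}$ so that the total number of vertices is at most~$n$ and the height is exactly~$Lm$; the hypothesis $n \geq L \cdot 16^m$ provides some slack on constants.

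The adversary does not reveal the tree up front but grows it phase by phase: in phase $i$, it reveals $B$ new paths of length~$L$ at every currently active hub at level $i-1$, lets the explorer play for the duration of the phase, and at the end selects, out of the $s_{i-1} B$ new candidate hubs at level~$i$, the $s_i$ with the fewest agents in their neighborhood as the new active hubs; the remaining candidates become dead-end leaves. By the pigeonhole principle (the smallest $s_i$ out of $s_{i-1}B$ non-negative values sum to at most $s_i/(s_{i-1}B)$ times the total), the number of agents across active hubs shrinks by a factor of at least $s_i/(s_{i-1}B)$ per phase.

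The key quantitative claim I need to prove is that phase~$i$ must last at least $iL$ rounds, summing to $L + 2L + \cdots + (m-1)L = L\binom{m}{2}$. The intuition is that once the adversary has played for $i-1$ phases, very few agents are near the new level-$i$ hubs: to explore the $B$ new paths out of each active hub within~$L$ rounds, the explorer would need roughly $s_iB$ agents already within distance~$L$, but the pigeonhole argument forces this count to be much smaller. Additional agents must therefore travel from dead-end branches of earlier phases, and since the adversary picks hubs with \emph{least} nearby agents, the minimum travel distance to gather enough agents grows by~$L$ each phase. I expect the main technical obstacle to be formalizing this $\Omega(iL)$ lower bound, likely via an invariant on the number of agents within each ball of radius $jL$ around each active hub combined with repeated pigeonhole across phases. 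Finally, I will need to pick $s_i$ and $B$ so that all constraints hold simultaneously: the total number of revealed vertices stays within~$n$, the pigeonhole step remains non-vacuous throughout, and the bound $k \leq n^{1+1/m}/(6L(m+1)^2(2L)^{1/m})$ suffices; the polynomial factor $6L(m+1)^2(2L)^{1/m}$ should emerge from combining the $m$-fold pigeonhole loss with the path length~$L$, a round-trip factor of~$2L$, and a slack of order~$m$ imposed at two stages of the argument.
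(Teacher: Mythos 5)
There is a genuine gap at the heart of your plan: the claim that phase $i$ must last at least $iL$ rounds does not follow from your construction. Your tree is a hierarchy of hubs, and the amount of work you reveal at each active hub is a \emph{fixed} quantity ($B$ paths of length $L$, with $B\approx (n/L)^{1/m}$), independent of how many agents are nearby. Consequently, ``few agents in the selected hubs' neighborhoods'' only buys you a local delay of roughly (work)/(nearby agents) rounds, and your intended mechanism --- that reinforcements must travel a distance growing by $L$ per phase --- is not supported by the structure: agents parked at the \emph{parent} hub (which is known to be active from the previous phase, and is the very vertex where the new paths are attached) or in sibling dead-end branches are only $O(L)$ away from the newly revealed candidates, not $iL$. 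Worse, in the regime this lemma must cover, $k$ can be much larger than $n$ (e.g.\ $k\approx n^{1+1/m}/\mathrm{poly}$), so the explorer can afford to escort the exploration front with large reserves and visit all candidate endpoints quickly; and since the game rules only allow the revealer to attach new trees at \emph{unvisited} vertices, you must additionally prove that unvisited candidates survive every phase --- a point your proposal never addresses and which is exactly where a fixed branching factor fails.

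The paper's construction avoids both problems with one trick that your proposal is missing: all branches hang directly off the root (the initial tree is a star of $\lceil n/(2L)\rceil$ paths of length $L$), and when the revealer extends a selected unvisited vertex $v$ at depth $Li$, it attaches a path ending in a star whose number of leaves is $L(i+1)\cdot a_{t_i}(v)$, i.e.\ \emph{proportional to the number of agents currently in $v$'s root-branch}. Then agents outside the branch must pass through the root and need at least $L(i+1)=t_{i+1}-t_i$ rounds just to reach the depth of the new leaves, while the $a_{t_i}(v)$ inside agents can visit at most $a_{t_i}(v)(L(i+1)-1)$ vertices in that time --- fewer than the number of leaves --- so some leaf provably survives, no matter how the explorer distributes its agents. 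The selection of the $\lceil\alpha|K_i|\rceil$ least-loaded branches (your pigeonhole step) is then used not to force long phases but to control the \emph{total size} of the attached stars ($\sum_{v\in S_i}a_{t_i}(v)\leq 2\alpha k$), which is where the bound $k\leq n^{1+1/m}/(6L(m+1)^2(2L)^{1/m})$ enters. If you want to salvage your hub hierarchy, you would need to both scale the revealed work with the local agent count and eliminate the possibility of cheap reinforcement from internal hubs; as written, a reasonable explorer strategy likely beats your adversary, so the $\Omega(iL)$ phase bound --- which you yourself flag as the main obstacle --- is not just unproven but probably false for your construction.
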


\begin{proof}Assume that integers $n$, $L$ and $m$ as above are given.
Let $k$ be any integer such that $1\leq k \leq n^{1+1/m}/(6L(m+1)^2(2L)^{1/m})$. 
To prove the lemma, we will describe a strategy for the revealer in the tree
exploration game with $k$ agents such that
\begin{itemize}
  \item the game does not end before round $t^* := L\cdot \binom{m}{2}$, and
  \item the tree $T_{t^*}$ has height $Lm$ and at most $n$ vertices,
\end{itemize}
where the notation is as in Section~\ref{sec:games}. Note that this is enough
to prove the lemma.

Before explaining the strategy, we fix some notation. Let
\[ \alpha := (2L/n)^{1/m} \quad \text{ and } \quad t_i := L\cdot \binom{i+1}{2} \]
for $0\leq i < m$.
For each $t\geq 0$ we can consider the equivalence relation $\sim_t$ on
$V(T_{t})$ where $u\sim_t v$ if there exists a path between $u$ and $v$ in
$T_{t}$ that avoids the root of $T_{t}$ (i.e., if they have a common
ancestor that is not the root). 
Since $T_0\subseteq T_1\subseteq T_2 \subseteq \dotsc$ are trees with the same root, we will just write 
$u\sim v$ instead of $u\sim_t v$ without causing confusion.
Then we define 
  $a_t(v) := |\{x \mid \phi_t(x) \sim v\}|$. 
In other words, $a_t(v)$ counts the total number of agents
that could reach vertex $v$ without passing through the root (under the assignment $\phi_t$).
We give the strategy for the revealer in Algorithm~\ref{alg:revealer}.

\vskip\baselineskip

\begin{algorithm}[H]
  \caption{The strategy for the revealer. \label{alg:revealer}}
  \SetAlgoLined
  \Begin{
    let $T_0$ be a `star' consisting of $\ceil{n/(2L)}$ paths 
      of length $L$ from the root\;
    \ForEach{round $t = 1,2,3, \dotsc$}{
      let the explorer choose $\phi_t$\;
      \eIf{$t = t_i$ for some $1\leq i< m$}{
        let $K_i$ be a maximal set of vertices in $V(T_{t_i-1})\setminus A_{t_i-1}$ s.t.
        \begin{enumerate}[(i)]
          \item every vertex in $K_i$ has distance $L\cdot i$ to the root in $T_{t_i-1}$
          \item there are no two distinct vertices $u,v\in K_i$ with $u\sim v$.
        \end{enumerate}
      let $S_i\subseteq K_i$ be the $\ceil{\alpha |K_i|}$ vertices
      $v\in K_i$ with least  $a_{t_i}(v)$\;
      define $T_{t_i}$ by attaching at each
      $v\in S_i$ a path of length $L-1$ with a
      star with $L\cdot (i+1) \cdot a_{t_i}(v)$ leaves at the end\;
      }{let $T_t = T_{t-1}$\;}
    }
  }
\end{algorithm}

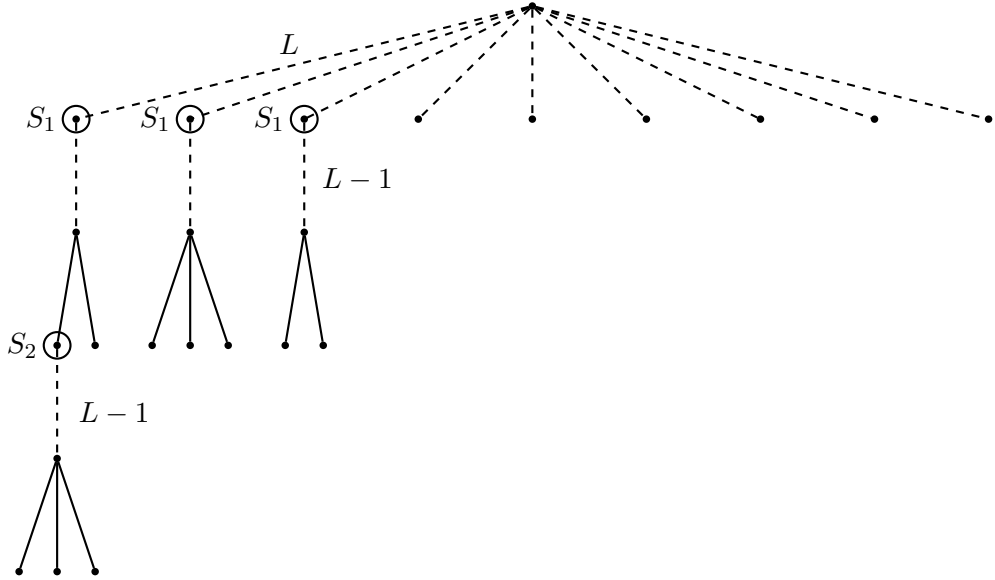
\begin{figure}[h]
  \label{fig}
  \centering
    \begin{tikzpicture}[level distance = 1.5cm]
      \tikzset{every path/.style={thick}}
      \tikzset{every node/.style={solid,draw,fill=black,circle,minimum size=2pt,inner sep=0pt}}
      \tikzset{s/.style={circle,fill=none,minimum size=10pt}}
      \path (-4,-4) rectangle (4,5);
      \node at (0,4) {}
         child [dashed] {[sibling distance=0.5cm] node (a) {}
                 child {node {} child [solid] {node (c) {} child [dashed] {node {} child [solid] {node {}}
                                                                  child [solid] {node {}}
                                                                  child [solid] {node {}}}}
                                child [solid] {node {}}}
               }
         child [dashed] {[sibling distance=0.5cm] node (b) {}
                 child {node {} child [solid] {node {}}
                                child [solid] {node {}}
                                child [solid] {node {}}}
               }
         child [dashed] {[sibling distance=0.5cm] node (d) {}
                 child {node {} child [solid] {node {}}
                                child [solid] {node {}}}
               }
         child [dashed] {node {}}
         child [dashed] {node {}}
         child [dashed] {node {}}
         child [dashed] {node {}}
         child [dashed] {node {}}
         child [dashed] {node {}};
      \node[s,label=180:$S_1$] at (a) {};
      \node[s,label=180:$S_1$] at (d) {};
      \node[s,label=180:$S_1$] at (b) {};
      \node[s,label=180:$S_2$] at (c) {};
      \tikzset{every node/.style={draw=none,fill=none}};
      \node at (-3.2,3.5) {$L$};
      \node at (-2.3,1.7) {$L-1$};
      \node at (-5.5,-1.4) {$L-1$};
    \end{tikzpicture}
  \caption{A sketch of the tree generated by the revealing strategy, for artificial
  values $\alpha=1/3$ and $\ceil{n/(2L)}=9$ (degrees in the actual construction are much larger). 
	The actual shape depends on the distribution of the agents at times $t_1, t_2$. 
	Dashed lines represent paths of the specified length.}
\end{figure}

\vskip\baselineskip

For a better intuition, we refer the reader to Figure~\ref{fig}, which shows
what the tree constructed by this strategy might look like. We establish three
claims which are used to show that the algorithm indeed runs for at least $t^*
= t_{m-1}$
rounds and that the tree constructed in this way has the right properties.

\begin{claim}\label{cl:height}
  For every $0\leq i <m$ the following holds.
  The height of $T_{t_i}$ is at most $L\cdot (i+1)$. Moreover, if
  $S_1,\dotsc,S_i$ are all non-empty, then the height of $T_{t_i}$ is exactly
  $L\cdot (i+1)$.
\end{claim}

\begin{proof}
  The tree $T_{t_i}$ differs from $T_{t_{i-1}}$ if and only if $S_i$ is non-empty, and in this case
  it is obtained by attaching trees of height $L$ at some vertices with distance $L\cdot i$
  to the root in $T_{t_{i-1}}$. Since $T_{t_0} = T_0$ has height $L$, this implies the claim
  by induction.
\end{proof}

\begin{claim}\label{cl:cl2}
  For all $1\leq i < m$ and every $v\in S_i$, there exists at least one
  descendant of $v$ at depth $L\cdot (i+1)$ in $T_{t_{i+1}-1}$ that does not belong to $A_{t_{i+1}-1}$.
  In particular, for all $1\leq i < m$ we have $|K_{i+1}| = |S_i|$. 
\end{claim}

\begin{proof}
  Each vertex at depth $L(i+1)$ is a descendant of some vertex $v\in S_i$. Moreover, we have $u\nsim v$
for any two distinct $u,v\in S_i$. Thus, the second claim follows directly from the first.
 
  For the first claim, consider any $1\leq i < m$ and $v\in S_i$. Note that
  \begin{enumerate}[(1)]
    \item at time $t_i$ we create $L\cdot (i+1)\cdot a_{t_i}(v)$ descendants
      of $v$ at depth $L\cdot (i+1)$;
    \item $t_{i+1}-t_i = L\cdot (i+1)$.
  \end{enumerate}
  Because of this, no agent passing through the root can visit any descendant
  of $v$ at depth $L\cdot(i+1)$ before
  round $t_{i+1}$. On the other hand, the $a_{t_i}(v)$
  agents that could visit a descendant at this depth without passing through the root cannot visit
  \emph{all} descendants before round $t_{i+1}$. Thus at least one descendant at
  depth $L\cdot (i+1)$ must be unvisited at the end of round $t_{i+1}-1$.
\end{proof}

\begin{claim}\label{cl:si}
  For every $1\leq i < m$ we have the bounds
  \[ |S_i| \geq \frac{\alpha^i n}{2L} \geq \frac{1}{\alpha} \quad\text{and}\quad
  |S_i| \leq \frac{(2\alpha)^i n}{2L}. \]
\end{claim}

\begin{proof}
  By definition we have $\alpha = (2L/n)^{1/m} < 1$ and thus $\alpha^m = 2L/n$, which gives us
  \[ \frac{\alpha^in}{2L} \geq \frac{\alpha^{m-1}n}{2L} = 1/\alpha \]
  for all $1\leq i < m$.

  For the lower bound, note that  since $A_0$ contains only the root, we have
  $|K_1| = \ceil{n/(2L)}$. By the definition of $S_i$, we have
  $|S_i|\geq \alpha |K_i|$ for all $1\leq i< m$. Moreover, if $2\leq i< m$ then by Claim~\ref{cl:cl2} we have
  $|K_i| = |S_{i-1}|$. The lower bound then follows by induction.

  For the upper bound, note that $K_1 \leq n/(2L) + 1 \leq n/L$,
  where the last inequality uses $n\geq 2L$. Moreover, using $|K_i| \geq 1/\alpha$ we have $|S_i|
  \leq \alpha|K_i|+1 \leq 2\alpha |K_i|$ for all $1\leq i< m$. Finally, if $2\leq i < m$
  then $|K_i| = |S_{i-1}|$ by Claim~\ref{cl:cl2}, and the upper bound follows by induction.
\end{proof}

Since $|S_i|>0$ implies in particular that $A_{t_i-1}\neq V(T_{t_i-1})$, we conclude from Claim~\ref{cl:si} 
that the game does not stop before reaching round $t_{m-1} = L\cdot
\binom{m}{2} = t^*$. Moreover, from Claim~\ref{cl:height} and Claim~\ref{cl:si} we see
that $T_{t_{m-1}}$ is a tree with height $L\cdot m$. To complete the proof we
need to show that $|V(T_{t_{m-1}})| \leq
n$. We have
\begin{align}
  |V(T_{t_{m-1}})| & \leq \ceil{n/(2L)}\cdot L + 1 + \sum_{i=1}^{m-1}\sum_{v\in S_i}(L-1+L\cdot
  (i+1)\cdot a_{t_i}(v))\nonumber\\
  & \leq n/2 + L +1 + \sum_{i=1}^{m-1}L(i+1)\sum_{v\in S_i}a_{t_i}(v)
  + \sum_{i=1}^{m-1} |S_i| (L-1). \label{eq:sum}
\end{align}
To bound the double sum note that $|K_i|\geq |S_i|\geq 1/\alpha$ (Claim~\ref{cl:si}) implies
that $\ceil{\alpha |K_i|} \leq 2\alpha |K_i|$.
Note also that the sum  $\sum_{v \in K_i} a_{t_i}(v)$ in \eqref{eq:sum} is at
most $k$, as no two vertices  $u, v$ from $K_i$ are in the same subtree, i.e.,
$u \not \sim v$.
Since $S_i$ contains the  $\ceil{\alpha |K_i|} \leq 2\alpha |K_i|$
vertices of $K_i$ with least $a_{t_i}(v)$, we thus have
\[ \sum_{v\in S_i}a_{t_i}(v) \leq 2\alpha \sum_{v\in K_i}a_{t_i}(v) \leq 2\alpha k, \]
and therefore
\begin{align}
	\sum_{i=1}^{m-1}L(i+1)\sum_{v\in S_i}a_{t_i}(v) \leq L(m+1)^2\alpha k. \label{eq:doublesum}
\end{align}
To bound the simple sum in \eqref{eq:sum}, we use the upper bound from Claim~\ref{cl:si} and obtain
\begin{align} 
	\sum_{i=1}^{m-1} |S_i| (L-1) \leq (L-1) \sum_{i=1}^\infty \frac{(2\alpha)^in}{2L}
= \frac{L-1}{2L}\cdot 2 \alpha n \sum_{i=0}^\infty (2\alpha)^i
\leq \frac{2\alpha n}{2-4\alpha}. \label{eq:simplesum}
\end{align}
Combining \eqref{eq:sum} with \eqref{eq:doublesum} and \eqref{eq:simplesum}, we get
\begin{align} 
	|V(T_{t_{m-1}})| \leq n/2 + L +1 + L(m+1)^2\alpha k +\frac{2\alpha n}{2-4\alpha}. \label{eq:simplersum}
\end{align}
Since $n \geq L\cdot 16^m\geq 12L$ we have $L+1\leq 2L\leq n/6$.
By the definition $\alpha=(2L/n)^{1/m}$ and the assumption  
$k  \le n^{1 + 1/m} / (6L (m+1)^2 (2L)^{1/m})$ we have 
\[ L(m+1)^2\alpha k = L(m+1)^2(2L/n)^{1/m}k \leq n/6.\] 
Finally, $n\geq L\cdot 16^m$ implies
that $\alpha \leq 1/8$ and so
the last term in~\eqref{eq:simplersum} is also at most $n/6$. Hence $|V(T_{t_{m-1}})| \leq n/2 + 3n/6 =n$.
\end{proof}

\section{Consequences for competitiveness}

We now use Lemma~\ref{lemma:main} to derive consequences for best-possible
competitive ratios of collaborative tree exploration algorithms.
In the proofs below, $\log$ is always to the natural base $e$.

\thmLoglogCompRatio*

\begin{proof}
  By the result of Dynia et al.~\cite{DyniaLopuszanskiSchindelhauer/07} it suffices to consider the case where
  $k\geq \sqrt{n}$. Let $c>0$ be a constant and assume $k\leq n\log^c n$.
  We apply Lemma~\ref{lemma:main} with 
  $m= \ceil{\frac{\log n}{(8 + c) \log\log n}}$ and $L = \ceil{n/(mk)}$.
  Using $k\geq \sqrt{n}$, we have~$L = \mathcal{O}(\sqrt{n})$ and $m = o(\log n)$ and thus $n\geq L\cdot 16^m$ holds for
  sufficiently large $n$.
The lemma states that if
  \begin{equation}\label{eq:cr}
    k\leq \frac{n^{1+1/m}}{6L(m+1)^2(2L)^{1/m}}
  \end{equation}
  then there is a tree of height $D := Lm$ on which the strategy needs
  at least $L\binom{m}{2} = \Omega((n/k+D)\cdot \log k/\log\log k)$ rounds. To complete the proof,
we need to show that \eqref{eq:cr} holds for all $1\leq k \leq n\log^c n$.
We split the analysis to two cases. Let us first assume $k \ge n/ m$ and thus $L = 1$. This implies
\[
\frac{n^{1+1/m}}{6L(m+1)^2(2L)^{1/m}} \ge \frac{n^{1 + 1/m}}{24m^2} 
\ge \frac{n \log^{8  + c} n}{24 \log^2 n} \ge k,
\]
when $k \le n \log^c n$ and for sufficiently large~$n$.

Now we consider the case $k < n/ m$. 
Using that assumption and  the definition of $L$ we obtain $L (m+1)^2 \le 4m n/ k$ and $2L \le 4 n / (mk)$. Putting it all together we have
\[
\frac{n^{1+1/m}}{6L(m+1)^2(2L)^{1/m}}= \frac{n}{6L (m +1)^2} \left(\frac{n}{2L} \right)^{1/m}
\ge \frac{k}{24m} \left (\frac{mk}{4} \right) ^{1/m} \ge k,
\]
where the last inequality holds for $k\geq \sqrt{n}$ because, for sufficiently large~$n$,
\[(mk)^{1/m} \ge k^{1/m} \geq e^{\frac{\log n}{2m}} \geq
e^{\frac{(8+c)\log \log n}{4}} \geq (\log n)^2 \geq 100m.\qedhere \]

\end{proof}

\thmIterations*

\begin{proof}
  We choose $L=1$ and $m = \ceil{1/2\epsilon}$ in Lemma~\ref{lemma:main}.
  The claim is trivial unless $\epsilon<1/5$, so we can eliminate rounding and assume generously that $1/m\geq
  1.4\epsilon$.
  The condition $n\geq L\cdot 16^m$ is clearly satisfied for sufficiently large~$n$.

  By Lemma~\ref{lemma:main}, there is a tree $T$ of height $m$ that needs time
  $\binom{m}{2}\geq m/(5\epsilon)$ to be explored, provided the team has size
  at most (for $n$ sufficiently large)
  \[ k \leq n^{1+1.4\epsilon}/(12(m+1)^2) \leq m\cdot n^{1+\epsilon} = D\cdot n^{1+\epsilon}.\qedhere \]
\end{proof}

\thmSquaredRounds*

\begin{proof}
  We choose $L=1$ and $m = \ceil{\sqrt{\log n}\,}$ in Lemma~\ref{lemma:main}.
  Then $n\geq L\cdot 16^m$ holds for sufficiently large $n$. Note also that
  for sufficiently large $n$,
  \[\frac{n^{1+1/m}}{12(m+1)^2} = \Omega(n\cdot e^{\sqrt{\log n}}/\log n) \geq n. \]
  The lemma now states that there exists a tree $T$ of height $m$ such that the
  given strategy with $k=n$ agents needs at least $\binom{m}{2}$ rounds to explore
  $T$. Since for large enough $n$ we have $\binom{m}{2}\geq m^2/3$, this implies the theorem.
\end{proof}

\thmAnySublinearDiameter*

\begin{proof}
  Suppose that $D\leq n^{o(1)}$, i.e., $D = n^{1/f(n)}$, where $f(n)$ is a function
  which tends to infinity with $n$. Let $L=D$ and note that we have
  \begin{align*}
  \frac{16L^{1/m}}{n^{1/m}} \leq  \frac{L^{1+1/m} (m+1)^2}{n^{1/m}} \le 
  \frac{4 m^2 n^{2 / f(n)} }{n^{1/m}}.
  \end{align*}
  If we  choose  $m = m(n) =\omega(1)$ as a function growing sufficiently slowly such that
  we have $m  \le  \min \{ (f(n))^{1/2}, (\log n)^{1/2}\}$,  then the
  following is true: $$
  \frac{4 m^2 n^{2 / f(n)} }{n^{1/m}} = 4 \cdot  e^{2 \log m  + 2 (\log n) / f(n) - \log n / m} \to 0.
  $$
 This implies $16 L^{1/m} = o(n^{1/m})$ and $L^{1+1/m} (m+1)^2 = o(n^{1/m})$.
  In particular, $n\geq L\cdot 16^m$ for sufficiently large $n$. Moreover,
  if $n$ is large enough then $k=\Theta(n)$ implies
  \[ \frac{n^{1+1/m}}{6L(m+1)^2(2L)^{1/m}} = \frac{n^{1+1/m}}{o(n^{1/m})} \geq k. \]
  By Lemma~\ref{lemma:main}, there
  exists a tree $T$ with height $Lm \geq D$ on which the strategy needs $L\binom{m}{2} = \omega(Lm)$
  rounds. Since $k= \Theta(n)$, the offline optimum is $\mathcal O(Lm+n/k) = \mathcal O(Lm)$, so the competitive ratio
  on the set of trees of height at least $D$ is $\omega(1)$, as claimed.
\end{proof}

\section{Conclusions}

In this paper we presented new lower bounds for collaborative tree exploration.
Including our results, the following bounds are now known.
For~$k = \mathcal{O}(1)$ or $k \geq D\cdot n^{1+\varepsilon}$ agents, a competitive ratio of~$\Theta(1)$ can be achieved.
For~$\omega(1) \leq k \leq n \log^c n$, the best-possible competitive ratio is
bounded by~$\Omega(\log k / \log \log k)$, and no constant competitive ratio is possible when $n \log^{c} n \leq k
\leq D\cdot n^{1+o(1)}$.
On the other hand, the best exploration algorithms for trees in the domain~$k
\leq D\cdot n^{1+o(1)}$ stay close to the trivial competitive ratio of~$k$ (the
best ratios are~$k / \log k$ and~$k^{o(1)}$, depending on the domain).

In summary, we now fully understand the domain where constant competitive ratios are possible, but,
outside this domain, a wide gap persists.

\section*{Acknowledgments}

We would like to thank Rajko Nenadov for useful discussions.

\pagebreak

\bibliographystyle{abbrv}
\bibliography{explorationbib}

\begin{thebibliography}{10}

\bibitem{AlbersHenzinger/00}
S.~Albers and M.~R. Henzinger.
\newblock Exploring unknown environments.
\newblock {\em SIAM Journal on Computing}, 29(4):1164--1188, 2000.

\bibitem{AleliunasKarpLiptonLovaszRackoff/79}
R.~Aleliunas, R.~M. Karp, R.~J. Lipton, L.~Lov\'asz, and C.~Rackoff.
\newblock Random walks, universal traversal sequences, and the complexity of
  maze problems.
\newblock In {\em Proceedings of the 20th Annual Symposium on Foundations of
  Computer Science (FOCS)}, pages 218--223, 1979.

\bibitem{AlonAvinKouckyKozmaLotkerTuttle/11}
N.~Alon, C.~Avin, M.~Koucký, G.~Kozma, Z.~Lotker, and M.~R. Tuttle.
\newblock Many random walks are faster than one.
\newblock {\em Combinatorics, Probability and Computing}, 20(4):481--502, 2011.

\bibitem{AmbuhlGasieniecPelcRadzikZhang/11}
C.~Ambühl, L.~G\k{a}sieniec, A.~Pelc, T.~Radzik, and X.~Zhang.
\newblock Tree exploration with logarithmic memory.
\newblock {\em ACM Transactions on Algorithms}, 7(2):1--21, 2011.

\bibitem{AwerbuchBetkeRivestSingh/99}
B.~Awerbuch, M.~Betke, R.~L. Rivest, and M.~Singh.
\newblock Piecemeal graph exploration by a mobile robot.
\newblock {\em Information and Computation}, 152(2):155--172, 1999.

\bibitem{BenderFernandezRonSahaiVadhan/02}
M.~A. Bender, A.~Fern\'andez, D.~Ron, A.~Sahai, and S.~Vadhan.
\newblock The power of a pebble: Exploring and mapping directed graphs.
\newblock {\em Information and Computation}, 176(1):1--21, 2002.

\bibitem{BenderSlonim/94}
M.~A. Bender and D.~K. Slonim.
\newblock The power of team exploration: Two robots can learn unlabeled
  directed graphs.
\newblock In {\em Proceedings 35th Annual Symposium on Foundations of Computer
  Science (FOCS)}, pages 75--85, 1994.

\bibitem{BlumKozen/78}
M.~Blum and D.~Kozen.
\newblock On the power of the compass (or, why mazes are easier to search than
  graphs).
\newblock In {\em Proceedings of the 19th Annual Symposium on Foundations of
  Computer Science (FOCS)}, pages 132--142, 1978.

\bibitem{ChalopinDasDisserMihalakWidmayer/11}
J.~Chalopin, S.~Das, Y.~Disser, M.~Mihal\'ak, and P.~Widmayer.
\newblock Mapping simple polygons: How robots benefit from looking back.
\newblock {\em Algorithmica}, 65(1):43--59, 2011.

\bibitem{ChalopinDasDisserMihalakWidmayer/15}
J.~Chalopin, S.~Das, Y.~Disser, M.~Mihal\'ak, and P.~Widmayer.
\newblock Mapping simple polygons.
\newblock {\em ACM Transactions on Algorithms}, 11(4):1--16, 2015.

\bibitem{DengPapadimitriou/99}
X.~Deng and C.~H. Papadimitriou.
\newblock Exploring an unknown graph.
\newblock {\em Journal of Graph Theory}, 32(3):265--297, 1999.

\bibitem{DereniowskiDisserKosowskiPajakUznanski/15}
D.~Dereniowski, Y.~Disser, A.~Kosowski, D.~Paj{\k{a}}k, and P.~Uzna{\'n}ski.
\newblock Fast collaborative graph exploration.
\newblock {\em Information and Computation}, 243:37--49, 2015.

\bibitem{DiksFraigniaudKranakisPelc/04}
K.~Diks, P.~Fraigniaud, E.~Kranakis, and A.~Pelc.
\newblock Tree exploration with little memory.
\newblock {\em Journal of Algorithms}, 51(1):38--63, 2004.

\bibitem{DisserHackfeldKlimm/16}
Y.~Disser, J.~Hackfeld, and M.~Klimm.
\newblock Undirected graph exploration with {$\Theta(\log \log n)$} pebbles.
\newblock In {\em Proceedings of the 27th Annual ACM-SIAM Symposium on Discrete
  Algorithms (SODA)}, pages 25--39, 2016.

\bibitem{DuncanKobourovKumar/06}
C.~A. Duncan, S.~G. Kobourov, and V.~S.~A. Kumar.
\newblock Optimal constrained graph exploration.
\newblock {\em ACM Transactions on Algorithms}, pages 380--402, 2006.

\bibitem{DyniaLopuszanskiSchindelhauer/07}
M.~Dynia, J.~\L{}opusza\'nski, and C.~Schindelhauer.
\newblock Why robots need maps.
\newblock In {\em Proceedings of the 14th International Colloquium on
  Structural Information and Communication Complexity (SIROCCO)}, pages 41--50,
  2007.

\bibitem{ElsasserSauerwald/11}
R.~Els\"asser and T.~Sauerwald.
\newblock Tight bounds for the cover time of multiple random walks.
\newblock {\em Theoretical Computer Science}, 412(24):2623--2641, 2011.

\bibitem{FraigniaudGasieniecKowalskiPelc/06}
P.~Fraigniaud, L.~G{\c{a}}sieniec, D.~R. Kowalski, and A.~Pelc.
\newblock Collective tree exploration.
\newblock {\em Networks}, 48(3):166--177, 2006.

\bibitem{FraigniaudIlcinkasPeerPelcPeleg/05}
P.~Fraigniaud, D.~Ilcinkas, G.~Peer, A.~Pelc, and D.~Peleg.
\newblock Graph exploration by a finite automaton.
\newblock {\em Theoretical Computer Science}, 345(2-3):331--344, 2005.

\bibitem{HigashikawaKatohLangermanTanigawa/12}
Y.~Higashikawa, N.~Katoh, S.~Langerman, and S.-i. Tanigawa.
\newblock Online graph exploration algorithms for cycles and trees by multiple
  searchers.
\newblock {\em Journal of Combinatorial Optimization}, 28(2):480--495, 2012.

\bibitem{Hoffmann/81}
F.~Hoffmann.
\newblock One pebble does not suffice to search plane labyrinths.
\newblock In {\em Proceedings of the 3rd International Symposium on
  Fundamentals of Computation Theory (FCT)}, pages 433--444, 1981.

\bibitem{OrtolfSchindelhauer/12}
C.~Ortolf and C.~Schindelhauer.
\newblock Online multi-robot exploration of grid graphs with rectangular
  obstacles.
\newblock In {\em Proceedinbgs of the 24th ACM Symposium on Parallelism in
  Algorithms and Architectures (SPAA)}, pages 27--36, 2012.

\bibitem{OrtolfSchindelhauer/14}
C.~Ortolf and C.~Schindelhauer.
\newblock A recursive approach to multi-robot exploration of trees.
\newblock In {\em Proceedings of the 21st International Colloquium on
  Structural Information and Communication Complexity (SIROCCO)}, pages
  343--354, 2014.

\bibitem{OrtolfSchindelhauer/15}
C.~Ortolf and C.~Schindelhauer.
\newblock Strategies for parallel unaware cleaners.
\newblock {\em Theoretical Computer Science}, 608:178--189, 2015.

\bibitem{Reingold/08}
O.~Reingold.
\newblock Undirected connectivity in log-space.
\newblock {\em Journal of the {ACM}}, 55(4):1--24, 2008.

\end{thebibliography}

\end{document}